\documentclass[pra,aps,twocolumn,showpacs,preprintnumbers,showkeys,superscriptaddress,epsf]{revtex4}
\usepackage{graphics,amsmath,amsfonts,amscd,revsymb,latexsym, enumerate,multirow,graphicx}

\newtheorem{theorem}{Theorem}[section]
\newtheorem{corollary}[theorem]{Corollary}
\newtheorem{lemma}[theorem]{Lemma}

\newtheorem{proposition}[theorem]{Proposition}
\newtheorem{definition}[theorem]{Definition}

\newcommand{\qed}{{\hfill$\Box$}}
\newenvironment{proof}{\noindent \textbf{{Proof~} }}{\qed}


\def\bi{\begin{itemize}}
\def\ei{\end{itemize}}
\def\be{\begin{equation}}
\def\ee{\end{equation}}
\def\bea{\begin{eqnarray}}
\def\eea{\end{eqnarray}}
\def\ben{\begin{eqnarray*}}
\def\een{\end{eqnarray*}}

\def\>{\rangle}
\def\<{\langle}

\def\bH{{\bf H}}

\newcommand{\bc}{{\mathbf c}}
\newcommand{\bd}{{\mathbf d}}

\def\bbZ{\mathbb{Z}}
\def\bbF{\mathbb{F}}
\newcommand{\1} I 

\def\*{\star}

%

\def\0{{\mathbf{0}}}
\def\1{{\mathbf{1}}}
\def\2{{\mathbf{2}}}
\def\3{{\mathbf{3}}}
\def\4{{\mathbf{4}}}
\def\5{{\mathbf{5}}}
\def\6{{\mathbf{6}}}
\def\7{{\mathbf{7}}}
\def\8{{\mathbf{8}}}
\def\9{{\mathbf{9}}}

\begin{document}

\title{Entanglement-Assisted Quantum Quasi-Cyclic Low-Density Parity-Check Codes}
%
\author{Min-Hsiu Hsieh, Todd A. Brun, and Igor Devetak  \\
Ming Hsieh Electrical Engineering Department \\
University of Southern California, Los Angeles, CA 90089}

\date{\today}

\begin{abstract}
We investigate the construction of quantum low-density parity-check
(LDPC) codes from classical quasi-cyclic (QC) LDPC codes with girth
greater than or equal to 6. We have shown that the classical codes in the
generalized Calderbank-Shor-Steane (CSS) construction do not need to satisfy the dual-containing property as long as pre-shared entanglement is
available to both sender and receiver. We can use this to avoid the
many 4-cycles which typically arise in dual-containing LDPC codes.
The advantage of such quantum codes comes from the use of efficient
decoding algorithms such as sum-product algorithm (SPA). It is well
known that in the SPA, cycles of length 4 make successive decoding
iterations highly correlated and hence limit the decoding
performance. We show the principle of constructing quantum QC-LDPC
codes which require only small amounts of initial shared
entanglement.
\end{abstract}

\pacs{03.67.Hk, 03.67.Mn, 03.67.Pp}%
\keywords{entanglement-assisted quantum error-correcting codes,
quantum low-density parity-check codes, and LDPC codes.}

\maketitle

\section{Introduction}
Low-density parity-check (LDPC) codes were first proposed by
Gallager \cite{RG63thesis} in the early 1960s, and were rediscovered
\cite{MN96,davey98low,mackay99good} in the 90s. It has been shown
that these codes can achieve a remarkable performance that is very
close to the Shannon limit. Sometimes, they perform even better
\cite{ Mackay98turbo} than their main competitors, the Turbo codes.
These two families of codes are called modern codes.

A $(J,L)$-regular LDPC code is defined to be the null space of a
binary parity check matrix $H$ with the following properties: (1)
each column consists of $J$ ``ones'' (each column has weight $J$); (2)
each row consists of $L$ ``ones'' (each row has weight $L$); (3) both
$J$ and $L$ are small compared to the length of the code $n$ and the
number of rows in $H$. Several methods of constructing good families
of regular LDPC codes have been proposed
\cite{mackay99good,KLF01,Fossorier04}. However, probably the easiest
method is based on circulant permutation matrices
\cite{Fossorier04}, which was inspired by Gallager's original LDPC
construction.

We define a cycle of a linear code to be of length $2s$ if there is
an ordered list of $2s$ matrix elements of $H$ such that: (1) all $2s$
elements are equal to 1; (2) successive elements in the list
are obtained by alternately changing the row or column only (i.e.,
two consecutive elements will have either the same row and different
columns, or the same column and different rows); (3) the positions
of all the $2s$ matrix elements are distinct, except the first and
last ones. We call the cycle of the shortest length the girth of the
code.

There are various methods for decoding classical LDPC codes
\cite{KLF01}. Among them, \emph{sum-product algorithm} (SPA)
decoding \cite{Tanner81} provides the best trade-off between
error-correction performance and decoding complexity. It has been
shown that the performance of SPA decoding very much depends on the
cycles of shortest length \cite{Tanner81}---in particular, cycles of
length 4. These shortest cycles make successive decoding iterations
highly correlated, and severely limit the decoding performance.
Therefore, to use SPA decoding, it is important to design codes
without short cycles, especially cycles of length 4.

Because classical LDPC codes have such good performance ---
approaching the channel capacity in the limit of large block size
--- there has been considerable interest in finding quantum versions
of these codes. However, quantum low-density parity-check codes
\cite{HI07QLDPC,MMM04QLDPC,COT05QLDPC,PC08QLDPC} are far less
studied than their classical counterparts. The main obstacle comes
from the \emph{dual-containing} constraint of the classical codes
that are used to construct the corresponding quantum codes. While
this constraint was not too difficult to satisfy for relatively
small codes, it is a substantial barrier to the use of highly
efficient LDPC codes. The second obstacle comes from the bad
performance of the iterative decoding algorithm. Though the SPA
decoding can be directly used to decode the quantum errors, its
performance is severely limited by the many 4-cycles, which are
usually the by-product of the dual-containing property, in the
standard quantum LDPC codes \cite{MMM04QLDPC}.

In this paper we will show that, by using the entanglement-assisted
formalism \cite{BDH06,BDH06IEEE}, these two obstacles of standard
quantum LDPC codes can be overcome. By allowing the use of
pre-shared entanglement between senders and receivers, the
dual-containing constraint can be removed. Constructing quantum LDPC
codes from classical LDPC codes becomes transparent. That is,
arbitrary classical binary or quaternary codes can be used to construct
quantum codes via the \emph{CSS} or \emph{generalized CSS constructions}
\cite{BDH06}. Moreover, we can easily construct quantum LDPC
codes from classical LDPC codes with girth at least 6. We make use
of classical quasi-cyclic LDPC codes in our construction, and show
that the resulting entanglement-assisted quantum LDPC codes have
good performance; we compare them to examples of standard LDPC
codes already proposed in the literature with similar \emph{net rates},
and show that the new quasicyclic codes have lower block-error rates.

This paper is organized as follows. We discuss properties of binary
circulant matrices, and give a brief introduction to classical
QC-LDPC codes in section \ref{II}. We also prove a few interesting
lemmas regarding classical QC-LDPC codes in this section. In section
\ref{III}, we discuss the principle of constructing quantum QC-LDPC
codes from classical QC-LDPC codes, such that the resulting quantum
QC-LDPC codes require only a small amount of initial pre-shared
entanglement. We also provide two examples of such constructions. In
section \ref{IV}, we compare the performance of the quantum QC-LDPC
codes illustrated in section \ref{III} with some previously proposed
quantum LDPC codes. Finally, in section \ref{V} we conclude.

\section{Preliminary}
\label{II}
\subsection{Properties of binary circulant matrices}
We begin with a well-known proposition for binary circulant
matrices.
\begin{proposition}
\label{iso} The set of binary circulant matrices of size $r \times
r$ forms a ring isomorphic to the ring of polynomials of degree less
than $r$: $\bbF_2[X]/\langle X^r-1\rangle$.
\end{proposition}

Let $M$ be an $r \times r$ circulant matrix over $\bbF_2$. We can
uniquely associate with $M$ a polynomial $M(X)$ with coefficients
given by entries of the first row of $M$. If
$\bc=(c_0,c_1,\cdots,c_{r-1})$ is the first row of the circulant
matrix $M$, then
\begin{equation}
\label{polyM} M(X)=c_0+c_1 X+c_2 X^2+\cdots+c_{r-1}X^{r-1}.
\end{equation}
Adding or multiplying two circulant matrices is equivalent to adding
or multiplying their associated polynomials modulo $X^r-1$. We now
give some useful properties of these matrices and polynomials.

The first lemma is a well-known result in the theory of cyclic codes \cite{FJM77}.
\begin{lemma}
\label{rank} Let $M(X)$ be the polynomial associated with the
$r\times r$ binary circulant matrix $M$. If $\gcd(M(X),X^r-1)=K(X)$,
and the degree of K(X) is $k$, then the rank of $M$ is $r-k$.
\end{lemma}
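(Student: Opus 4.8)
The plan is to use the ring isomorphism from \prop{iso} to translate the rank statement into a statement about the dimension of a principal ideal in the quotient ring $R = \bbF_2[X]/\langle X^r-1\rangle$. Under this isomorphism, the circulant matrix $M$ corresponds to the polynomial $M(X)$, and the row space of $M$ corresponds exactly to the ideal $(M(X))$ generated by $M(X)$ in $R$, since the successive cyclic shifts of the first row of $M$ (which span the row space) correspond to $M(X), X M(X), X^2 M(X), \ldots$ modulo $X^r - 1$. Thus $\rank M$ equals $\dim_{\bbF_2}$ of the principal ideal generated by $M(X)$, and the whole problem reduces to computing that dimension.

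First I would invoke the standard structure theory of ideals in $R$. Because $X^r - 1$ generates the zero ideal in $R$, every ideal is principal and is generated by a divisor of $X^r - 1$; more precisely, the ideal $(M(X))$ in $R$ coincides with the ideal $(K(X))$, where $K(X) = \gcd(M(X), X^r-1)$. The key algebraic fact here is that in $R$ one can write $K(X) = a(X) M(X) + b(X)(X^r-1)$ by Bézout, so $K(X) \in (M(X))$, while conversely $M(X)$ is a multiple of $K(X)$; hence the two principal ideals are equal. This is the heart of the argument: identifying the generator of the row-space ideal as the gcd rather than $M(X)$ itself.

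Next I would count the dimension of the ideal $(K(X))$ where $\deg K = k$ and $K(X) \mid X^r - 1$. Writing $X^r - 1 = K(X) G(X)$ with $\deg G = r - k$, the ideal $(K(X))$ consists of all polynomials of the form $K(X) f(X) \bmod (X^r-1)$, and a natural $\bbF_2$-basis is $\{K(X), X K(X), \ldots, X^{r-k-1} K(X)\}$. These $r-k$ elements are linearly independent because any nontrivial $\bbF_2$-combination has the form $K(X) h(X)$ with $\deg h < r-k = \deg G$, which cannot be divisible by $X^r - 1 = K(X)G(X)$ unless it is zero; and they span because reducing any higher shift $X^i K(X)$ modulo $X^r-1$ falls back into their span. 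Therefore $\dim_{\bbF_2}(K(X)) = r - k$, giving $\rank M = r - k$ as claimed.

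The main obstacle I anticipate is the careful justification that the row space of the circulant matrix $M$ maps precisely onto the ideal $(M(X))$, and the verification that $(M(X)) = (K(X))$ in the quotient ring. The dimension count in the last step is routine once the basis is exhibited, but one must be slightly careful that linear independence is argued modulo $X^r - 1$ rather than in the polynomial ring itself. Since this is cited as a classical result in the theory of cyclic codes \cite{FJM77}, I expect the proof to be a clean chain of reductions, with \prop{iso} doing the structural heavy lifting.
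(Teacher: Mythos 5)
Your proof is correct. There is nothing in the paper to compare it against: the paper states this lemma without proof, attributing it as a well-known result in the theory of cyclic codes \cite{FJM77}. Your argument --- identifying the row space of the circulant $M$ with the principal ideal $(M(X))$ in $\bbF_2[X]/\langle X^r-1\rangle$, showing via B\'ezout that this ideal equals $(K(X))$ with $K(X)=\gcd(M(X),X^r-1)$, and then exhibiting the basis $\{K(X), XK(X),\ldots,X^{r-k-1}K(X)\}$ --- is precisely the standard dimension count for cyclic codes that the citation points to, and each of the three steps (row space equals ideal, ideal equals $(K(X))$, dimension equals $r-k$) is argued soundly, including the degree-based linear independence check modulo $X^r-1$.
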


In the following, we will discuss some particular cases of the circulant matrix $M$ that will play important roles in the later section.
\begin{theorem}
\label{method1}%
Let $r=pq$, where $p,q>1$. Let $\bc=(c_0,c_1,\cdots,c_{r-1})$ be the first row of
an $r\times r$ circulant matrix $M$.
\begin{enumerate}
\item If $c_{(k+pi)\,{\rm mod}\,r}=1$, for some $k$ and $i=0,1,\cdots,(q-1)$, and $0$ otherwise,
then rank$(M)=p$.
\item If $c_{(k+i)\,{\rm mod}\,r}=1$, for some $k$ and $i=0,1\cdots,(p-1)$, and $0$ otherwise,
then rank$(M)=r-p+1$.
\end{enumerate}
\end{theorem}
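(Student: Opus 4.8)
The plan is to reduce both parts to Lemma~\ref{rank}: in each case I write down the degree-$<r$ polynomial $M(X)$ associated with the prescribed first row via (\ref{polyM}), identify $\gcd(M(X),X^r-1)$, and read off the rank as $r$ minus the degree of that gcd. Two observations streamline everything. First, since the gcd of any polynomial with $X^r-1$ depends only on that polynomial modulo $X^r-1$, I may replace the literal first-row polynomial (whose exponents are reduced mod $r$) by any convenient representative of the same residue class; this lets me ignore the wrap-around caused by the shift $k$. Second, over $\bbF_2$ the monomial $X^k$ is a unit modulo $X^r-1$, because $X^r-1$ has nonzero constant term and hence $\gcd(X^k,X^r-1)=1$; so multiplying by $X^k$ never affects the gcd, and in particular the rank is independent of $k$.

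For part (1), the hypothesis puts $1$'s at the $q$ distinct positions $k,k+p,\dots,k+(q-1)p$ modulo $r$, so $M(X)\equiv X^k\bigl(1+X^p+\cdots+X^{(q-1)p}\bigr)=X^k\frac{X^r-1}{X^p-1}\pmod{X^r-1}$, where I used $r=pq$. Dropping the unit $X^k$, the factor $g(X)=\frac{X^r-1}{X^p-1}$ is by its very definition a divisor of $X^r-1$, so $\gcd(M(X),X^r-1)=g(X)$, of degree $(q-1)p=r-p$. Lemma~\ref{rank} then gives $\rank(M)=r-(r-p)=p$.

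For part (2), the $1$'s occupy the $p$ consecutive positions $k,k+1,\dots,k+(p-1)$ modulo $r$, so $M(X)\equiv X^k\bigl(1+X+\cdots+X^{p-1}\bigr)=X^k\frac{X^p-1}{X-1}\pmod{X^r-1}$. After discarding $X^k$, I would establish the short divisibility chain $\frac{X^p-1}{X-1}\mid X^p-1\mid X^r-1$, the last step using $p\mid r$. Hence $\frac{X^p-1}{X-1}$ divides $X^r-1$ and therefore equals $\gcd(M(X),X^r-1)$; its degree is $p-1$, and Lemma~\ref{rank} yields $\rank(M)=r-(p-1)=r-p+1$.

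Neither part conceals a serious difficulty, so the work is mostly bookkeeping. The step I would watch most carefully is the claim that the gcd equals the whole factor rather than some proper divisor of it: this rests precisely on the factor literally dividing $X^r-1$, which is automatic in part (1) and requires the one-line chain $\frac{X^p-1}{X-1}\mid X^p-1\mid X^r-1$ in part (2). I would also check at the outset that the listed positions really are distinct modulo $r$---so that the row weights are $q$ and $p$ as intended---which follows from $p,q>1$ and $r=pq$.
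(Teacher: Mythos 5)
Your proposal is correct and follows essentially the same route as the paper: write down $M(X)$, strip off the unit $X^k$, identify $\gcd(M(X),X^r-1)$ as $1+X^p+\cdots+X^{(q-1)p}$ (degree $r-p$) in part (1) and $1+X+\cdots+X^{p-1}$ (degree $p-1$) in part (2), and apply Lemma~\ref{rank}. Your version is in fact slightly more careful than the paper's, since you justify that the wrap-around modulo $X^r-1$ and the unit factor $X^k$ do not affect the gcd, and that the candidate factors genuinely divide $X^r-1$ (your chain $\frac{X^p-1}{X-1}\mid X^p-1\mid X^r-1$ replaces the paper's equivalent factorization $X^r-1=(X-1)\bigl(1+X^p+\cdots+X^{(q-1)p}\bigr)\cdot\frac{X^p-1}{X-1}$), steps the paper leaves implicit.
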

\begin{proof}
\begin{enumerate}
\item If $c_{(k+pi)\,{\rm mod}\,r}=1$, for some $k$ and
$i=0,1,\cdots,(q-1)$, and $0$ otherwise, we have
\begin{align*}
M(X)&=X^k\left(1+X^p+X^{2p}+\cdots+X^{(q-1)p}\right)\\
&=X^k\left(\frac{X^r-1}{X^p-1}\right).
\end{align*}
Since $\gcd(X^k,X^p-1)=1$, the following holds
\begin{align*}
K(X)&=\gcd\left(M(X),(X^r-1)\right) \\
&=1+X^p+X^{2p}+\cdots+X^{(q-1)p}.
\end{align*}
Then the degree of $K(X)$ is
$pq-p=r-p$. Therefore, by lemma~\ref{rank}, the rank of $M$ is $p$.
\item In this case, the polynomial is

\begin{align*}
M(X)&=X^k\left(1+X+X^2+\cdots+X^{(p-1)}\right) \\
&=X^k\left(\frac{X^r-1}{(X-1)(1+X^p+X^{2p}+\cdots+X^{(q-1)p})}\right)
\end{align*}
and $$K(X)=1+X+X^2+\cdots+X^{(p-1)}.$$
    Then the degree of $K(X)$ is $p-1$.
Therefore, by lemma~\ref{rank}, the rank of $M$ is $r-p+1$.
\end{enumerate}
\end{proof}

We also have the following corollary.
\begin{corollary}
\label{kappa} Suppose $r=pq$, and $p,q>1$. If the weight of each row is $p$, and $K(X)\neq1$,
then the rank $\kappa$ of matrix $M$ is upper-bounded by $r-p+1$.
\end{corollary}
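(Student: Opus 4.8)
The plan is to convert the rank bound into a lower bound on the degree of $K(X)$ and then extract that lower bound from the weight hypothesis. By Lemma~\ref{rank}, the rank of $M$ is $\kappa = r - k$, where $k = \deg K(X)$ and $K(X) = \gcd\!\left(M(X), X^r - 1\right)$. Hence the assertion $\kappa \le r - p + 1$ is equivalent to the inequality $k \ge p - 1$. The hypothesis $K(X) \neq 1$ gives only $k \ge 1$, so the entire content of the corollary is the improvement from $k \ge 1$ to $k \ge p-1$.

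To produce this improvement I would argue that the extremal configuration is exactly the one exhibited in the second case of Theorem~\ref{method1}. There the weight-$p$ polynomial $X^{k}\!\left(1 + X + \cdots + X^{p-1}\right)$ has greatest common divisor with $X^r-1$ equal to $1 + X + \cdots + X^{p-1}$, so $\deg K = p-1$ is attained; the claim is then that no weight-$p$ polynomial with $K(X)\neq 1$ can do better. Concretely, I would pass to the roots: any root $\zeta$ of $K(X)$ is an $r$-th root of unity with $M(\zeta)=0$, and $k$ counts these roots together with their Frobenius conjugates and multiplicities. Using the factorization of $X^r-1$ over $\bbF_2$ (which, since $r=pq$ is composite, splits into several cyclotomic factors) together with the constraint that $M(X)$ has only $p$ nonzero terms, the goal is to show that sharing a single nontrivial factor with $X^r-1$ already forces $M(X)$ to share enough factors to accumulate total degree at least $p-1$.

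The main obstacle is precisely this last step. Divisibility alone is too weak: a low-degree factor can divide a heavy polynomial -- for instance, when $p$ is even one always has $M(1)=0$, so $(X+1)\mid M(X)$ regardless of how the support is placed. Thus the bound cannot follow formally from $K(X)\neq 1$, and must exploit the interaction between the weight $p$ and the specific cyclotomic factor structure of $X^{pq}-1$. I expect the crux to be a careful accounting -- likely split according to the parity of $p$ and the way $p$ and $q$ enter the factorization of $X^r-1$ -- pinning down which combinations of irreducible factors a weight-$p$ polynomial is compelled to contain once it is rank-deficient. This combinatorial control over the admissible factor patterns, rather than the reduction via Lemma~\ref{rank}, is where the real work lies.
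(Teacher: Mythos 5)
Your proposal is not a proof: after the correct reduction via Lemma~\ref{rank} to the inequality $\deg K(X)\ge p-1$, you explicitly defer the crux---that a weight-$p$ polynomial sharing \emph{any} nontrivial factor with $X^r-1$ must share a factor of degree at least $p-1$---to ``where the real work lies,'' and that work is never done. So, judged as a proof attempt, it stops exactly at the step that carries all the content.

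That said, your diagnosis of \emph{why} the step resists proof is exactly right, and it is worth saying what it implies. The paper's own proof consists of nothing but the assertion you declined to make: it claims that ``the lowest possible degree of $M(X)$ that divides $X^r-1$ is $p-1$,'' attained by $1+X+\cdots+X^{p-1}$, and then cites Theorem~\ref{method1}(2). That reasoning only covers the case in which $M(X)$ itself (up to a monomial shift) divides $X^r-1$; it ignores the case where $M(X)$ merely shares a proper factor with $X^r-1$, and $K(X)$---being a divisor of $M(X)$, not a shift of it---need not inherit the weight $p$ or any degree lower bound beyond $1$. Your $(X+1)$ observation for even $p$ turns into an outright counterexample to the corollary as stated: take $r=8$, $p=4$, $q=2$, and $M(X)=1+X+X^2+X^4=(X+1)(X^3+X^2+1)$. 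Since $X^8-1=(X+1)^8$ over $\bbF_2$ and $X^3+X^2+1$ is irreducible and different from $X+1$, one gets $K(X)=X+1\neq 1$, hence $\kappa=8-1=7$, which exceeds $r-p+1=5$. The loophole is live even in the regime where the paper applies the corollary (Theorem~\ref{bound} with even row weight $L$): for $r=16$ the weight-$8$ polynomial $1+X+X^2+X^3+X^4+X^5+X^6+X^8$ has $\gcd$ exactly $X+1$ with $X^{16}-1=(X+1)^{16}$, giving rank $15$ rather than $\le 9$. So the gap in your argument cannot be closed; the statement is simply false without a stronger hypothesis (e.g., that $X^{-k}M(X)$ divides $X^r-1$ for some $k$, which is what actually holds in Theorem~\ref{method1} and in the structured examples Ex1 and Ex2 where the bound is used). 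The honest conclusion of your analysis should have been a counterexample, not an expectation of ``careful accounting.''
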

\begin{proof}
Since the weight of $\bc$ is $p$, the lowest possible degree of
$M(X)$ that divides $X^r-1$ is $p-1$, wherein
\begin{equation}\label{eq_ir}%
M(X)=1+X+X^2+\cdots+X^{(p-1)}.
\end{equation}%
Then item 2 of theorem~\ref{method1} confirms
the rank $\kappa$ is at most $r-p+1$.
\end{proof}

\subsection{Classical quasi-cyclic LDPC codes}

\begin{definition}
A binary linear code $C(H)$ of length $n=r\cdot L$ is called
quasi-cyclic (QC) with period $L$ if any codeword which is
cyclically right-shifted by $L$ positions is again a codeword.
Such a code can be represented by a parity-check matrix $H$
consisting of $r\times r$ blocks (by properly rearranging the
coordinates of the code), each of which is an (in general different)
$r\times r$ circulant matrix.
\end{definition}

By the isomorphism mentioned in Prop.~\ref{iso}, we can associate
with each quasi-cyclic parity-check matrix $H\in\bbF_2^{Jr\times
Lr}$ a $J\times L$ polynomial parity-check matrix
$\bH(X)=[h_{j,l}(X)]_{j\in[J],l\in[L]}$ where $h_{j,l}(X)$ is the
polynomial, as defined in (\ref{polyM}), representing the $r\times
r$ circulant submatrix of $H$, and the notation
$[J]:=\{1,2,\cdots,J\}$.

Generally, there are two ways of constructing  $(J,L)$-regular
QC-LDPC by using circulant matrices \cite{SV04}:
\begin{definition}
We say that a QC-LDPC code is Type-I if it is given by a polynomial
parity-check matrix $\bH(X)$ such that all entries are non-zero
monomials. We say that a QC-LDPC code is Type-II if it is given by a
polynomial parity-check matrix $\bH(X)$ with either binomials,
monomials, or zero.
\end{definition}
\subsubsection{Type-I QC-LDPC}

To give an example, let $r=16$, $J=3$, and $L=8$.  The following
polynomial parity check matrix
\begin{equation}
\label{typeI}
\bH(X)=\left[\begin{array}{cccccccc}%
X & X & X & X & X & X & X & X \\
X^2 & X^5 & X^{3} & X^{5} & X^2 & X^5 & X^3 & X^5 \\
X^2 & X^3 & X^4 & X^5 & X^6 & X^7 & X^8 & X^9 \end{array}\right]
\end{equation}
gives a Type-I $(3,8)$-regular QC-LDPC code of length
$n=16\cdot8=128$.
Later on, we will also express $\bH(X)$ by its {\it exponent matrix}
$H_E$. For example, the exponent matrix of (\ref{typeI}) is
\begin{equation}
H_E=\left[\begin{array}{cccccccc}%
1 & 1 & 1 & 1 & 1 & 1 & 1 & 1 \\
2 & 5 & 3 & 5 & 2 & 5 & 3 & 5 \\
2 & 3 & 4 & 5 & 6 & 7 & 8 & 9\end{array}\right].
\end{equation}
The difference of two arbitrary rows of the exponent matrix $H_E$ is
defined as
\begin{equation}
\label{diff}%
\bd_{ij}=\bc_i-\bc_j=\left((c_{i,k}-c_{j,k}) \text{mod}\
r\right)_{k\in[L]},
\end{equation}
where $\bc_i$ is the $i$-th row of $H_E$ and $r$ is the size of the
circulant matrix. We then have
\begin{eqnarray*}
\bd_{21} &=& (1,4,2,4,1,4,2,4)\\
\bd_{31} &=& (1,2,3,4,5,6,7,8) \\
\bd_{32} &=& (0,14,1,0,4,2,5,4).
\end{eqnarray*}
We call an integer sequence $\bd=(d_0,d_1,\cdots,d_{L-1})$ {\it
multiplicity even} if each entry appears an even number of times.
For example, $\bd_{21}$ is multiplicity even, but $\bd_{32}$ is not,
since only $0$ and $4$ appear an even number of times. We call $\bd$
{\it multiplicity free} if no entry is repeated; for example,
$\bd_{31}$.

A simple necessary condition for Type-I $(J,L)$-regular QC-LDPC
codes to give girth $g\geq 6$ is given in \cite{Fossorier04}.
However, a stronger result (both sufficient and necessary condition)
is shown in \cite{HI07QLDPC}. We state these theorems  from
\cite{HI07QLDPC} without proofs.
\begin{theorem}
\label{type-I1} A Type-I QC-LDPC code $C(H_E)$ is dual-containing if
and only if $\bc_i-\bc_j$ is multiplicity even for all $i$ and $j$,
where $\bc_i$ is the $i$-th row of the exponent matrix $H_E$.
\end{theorem}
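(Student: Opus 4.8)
The plan is to reduce the dual-containing property to a single matrix identity and then translate that identity into the polynomial language of Prop.~\ref{iso}. Recall that $C(H_E)$ is the null space of the binary block-circulant matrix $H$ whose $(i,j)$ block $M_{i,j}$ is the circulant matrix of the monomial $X^{c_{i,j}}$, that is, a single cyclic permutation matrix. Its dual $C(H_E)^\perp$ is the row space of $H$, so the dual-containing condition $C(H_E)^\perp\subseteq C(H_E)$ is equivalent to requiring every row of $H$ to lie in the null space of $H$, i.e.\ to the identity $HH^T=0$ over $\bbF_2$. Establishing this reformulation is the first step.

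Next I would expand $HH^T$ blockwise: its $(i,j)$ block is $\sum_{l=1}^{L}M_{i,l}M_{j,l}^T$. Under the isomorphism of Prop.~\ref{iso}, transposition of a circulant matrix sends the associated polynomial $P(X)$ to $P(X^{-1})\bmod(X^r-1)$, so $M_{j,l}^T$ corresponds to $X^{-c_{j,l}}$, and the $(i,j)$ block corresponds, in $\bbF_2[X]/\langle X^r-1\rangle$, to
\begin{equation*}
\sum_{l=1}^{L}X^{c_{i,l}}X^{-c_{j,l}}=\sum_{l=1}^{L}X^{(c_{i,l}-c_{j,l})\bmod r}.
\end{equation*}
The exponents appearing here are exactly the entries of the difference vector $\bd_{ij}=\bc_i-\bc_j$ of (\ref{diff}).

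The final step is to determine when this polynomial vanishes. Since its exponents already lie in $\{0,1,\dots,r-1\}$, no wraparound modulo $X^r-1$ is needed, and collecting like powers shows that the coefficient of $X^m$ equals, modulo $2$, the number of indices $l$ with $(c_{i,l}-c_{j,l})\bmod r=m$. Hence the $(i,j)$ block is zero over $\bbF_2$ if and only if every value occurs an even number of times among the entries of $\bd_{ij}$---precisely the statement that $\bd_{ij}$ is \emph{multiplicity even}. Demanding that all blocks vanish yields the claimed equivalence; the diagonal blocks $i=j$ give the consistent degenerate case $\bd_{ii}=\mathbf{0}$, which is multiplicity even exactly when $L$ is even.

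I expect the main hazard to be purely bookkeeping: getting the transpose rule $M^T\leftrightarrow X^{-c}$ and its sign right, after which the equivalence of $HH^T=0$ with the multiplicity-even condition is immediate. It is worth emphasizing that the Type-I hypothesis is what makes this clean, since it forces each block to contribute a single monomial $X^{c_{i,l}-c_{j,l}}$, so that the exponents coincide with the entries of $\bd_{ij}$; for Type-II codes the binomial blocks would contribute several terms per block and the counting argument would need to be refined.
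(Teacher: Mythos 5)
Your proof is correct and takes essentially the same route as the paper: although Theorem~\ref{type-I1} itself is stated without proof (cited from \cite{HI07QLDPC}), the paper's proof of its Type-II generalization, Theorem~\ref{type-II-dual}, is precisely your argument---identify dual-containment with $HH^T=0$, apply the transpose rule $X^k\mapsto X^{r-k}$ blockwise, and observe that each block $\sum_{l\in[L]}X^{(c_{i,l}-c_{j,l})\bmod r}$ vanishes over $\bbF_2$ if and only if $\bd_{ij}$ is multiplicity even. Your explicit handling of the diagonal blocks (which forces $L$ to be even in the Type-I case) is a detail the paper leaves implicit, but it is consistent with the theorem's quantification over all $i$ and $j$.
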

\begin{theorem}
\label{type-I2} There is no dual-containing Type-I QC-LDPC having
girth $g\geq 6$.
\end{theorem}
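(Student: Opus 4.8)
The plan is to combine the dual-containing criterion of Theorem~\ref{type-I1} with a direct construction of a $4$-cycle and derive a contradiction. Suppose $C(H_E)$ were a Type-I QC-LDPC code that is dual-containing and has girth $g\geq 6$. Such a code has $J\geq 2$ block rows (with a single block row every variable node has degree one and the Tanner graph has no cycles at all), so fix two distinct rows $\bc_i,\bc_j$ of the exponent matrix $H_E$. By Theorem~\ref{type-I1} their difference $\bd_{ij}=\bc_i-\bc_j$ is multiplicity even.

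First I would observe that a nonempty multiplicity-even sequence can never be multiplicity free: the sequence $\bd_{ij}$ has length $L\geq 1$, so at least one value occurs in it, and by multiplicity evenness that value occurs an even number of times, hence at least twice. Therefore there are two distinct columns $l_1\neq l_2$ with
\begin{equation*}
c_{i,l_1}-c_{j,l_1}\equiv c_{i,l_2}-c_{j,l_2}\pmod{r}.
\end{equation*}

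The crux, and the step I expect to be the main obstacle, is to convert this repeated difference into an explicit $4$-cycle, showing $g\leq 4$. I index rows and columns of $H$ by pairs (block index, offset), so that the monomial $X^{c_{j,l}}$ is the circulant permutation matrix whose $(a,b)$ entry equals $1$ iff $b\equiv a+c_{j,l}\pmod{r}$. Fixing an arbitrary offset $a$ and writing $\delta:=c_{i,l_1}-c_{j,l_1}$, the check row $(i,a)$ is incident to the two variable columns $C_1=(l_1,a+c_{i,l_1})$ and $C_2=(l_2,a+c_{i,l_2})$, while the check row $(j,a+\delta)$ is incident to these \emph{same} two columns---this is exactly where the displayed equality is used, since it forces a single offset $a+\delta$ to serve for both $l_1$ and $l_2$. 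The four unit entries at $\big((i,a),C_1\big),\big((j,a+\delta),C_1\big),\big((j,a+\delta),C_2\big),\big((i,a),C_2\big)$ alternately change row and column and close up, so they constitute a length-$4$ cycle in the sense defined earlier; the one delicate check is that these four positions are genuinely distinct, which holds because $i\neq j$ gives distinct rows and $l_1\neq l_2$ places $C_1,C_2$ in different circulant blocks.

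This produces $g\leq 4$, contradicting $g\geq 6$ and establishing the theorem. Equivalently, the argument shows that dual-containing forces every $\bd_{ij}$ to be multiplicity even whereas $g\geq 6$ forces every $\bd_{ij}$ to be multiplicity free, and no nonempty sequence can be both.
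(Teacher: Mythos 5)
Your proposal is correct. For comparison: the paper never actually proves this theorem---Theorems \ref{type-I1}--\ref{type-I3} are quoted from Ref.~\cite{HI07QLDPC} without proofs---but it does prove the Type-II analogue (the unlabeled theorem closing Section~\ref{II}), and that proof is precisely the two-line argument you give as your closing ``equivalently'' remark: dual-containing forces every $\bc_i-\bc_j$ to be multiplicity even, girth $g\geq 6$ forces every $\bc_i-\bc_j$ to be multiplicity free, and no nonempty sequence is both. Your write-up strengthens this in two ways. First, you state explicitly the pigeonhole step (in a length-$L\geq 1$ multiplicity-even sequence some value occurs at least twice) that the paper's Type-II proof uses silently. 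Second, and more substantively, you do not take the girth criterion (Theorem~\ref{type-I3}) as a black box but re-derive the direction you need by exhibiting the 4-cycle; your bookkeeping is right---with the convention that block $(j,l)$ has its 1-entries at positions $(a,\,a+c_{j,l} \bmod r)$, the repeated difference $c_{i,l_1}-c_{j,l_1}\equiv c_{i,l_2}-c_{j,l_2}\pmod r$ makes check rows $(i,a)$ and $(j,a+\delta)$ share the two columns $(l_1,a+c_{i,l_1})$ and $(l_2,a+c_{i,l_2})$, and $i\neq j$, $l_1\neq l_2$ guarantee the four positions are distinct. This self-containedness is a genuine gain, since the multiplicity-free criterion is itself an imported result here. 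One small caveat: your claim that girth $\geq 6$ forces $J\geq 2$ rests on a convention about acyclic Tanner graphs (for $J=1$ and $L$ even the code is dual-containing yet has no cycles at all, hence infinite girth), so the theorem implicitly presumes $J\geq 2$; flagging that degenerate case, as you do, is more care than the paper takes.
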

\begin{theorem}
\label{type-I3} A necessary and sufficient condition for a Type-I
QC-LDPC code $C(H_E)$ to have girth $g\geq 6$ is $\bc_{i}-\bc_j$ to
be multiplicity free for all $i$ and $j$.
\end{theorem}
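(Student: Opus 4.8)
The plan is to translate the combinatorial girth condition into an arithmetic condition on the exponent matrix $H_E$ by characterizing exactly when the Tanner graph of $C(H_E)$ contains a cycle of length $4$. Since the Tanner graph is bipartite, all its cycles have even length $\geq 4$, so girth $g\geq 6$ is equivalent to the absence of $4$-cycles. Hence it suffices to show that a $4$-cycle exists if and only if some difference $\bc_i-\bc_j$ fails to be multiplicity free.

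First I would fix the indexing. Because every entry $h_{j,l}(X)=X^{c_{j,l}}$ is a monomial, each $r\times r$ block of $H$ is a single circulant permutation matrix, whose $(s,t)$ entry equals $1$ exactly when $t\equiv s+c_{j,l}\pmod r$. I index a row of $H$ by a pair $(j,s)$ (block-row $j\in[J]$ and offset $s$) and a column by $(l,t)$. In this language a $4$-cycle is a rectangle of four $1$'s occupying two distinct rows $(j_1,s_1),(j_2,s_2)$ and two distinct columns $(l_1,t_1),(l_2,t_2)$. The key structural observation is that the two rows must lie in different block-rows and the two columns in different block-columns: a permutation matrix has exactly one $1$ per row and per column, so two $1$'s sharing a column cannot lie in the same block-row and two $1$'s sharing a row cannot lie in the same block-column; thus $j_1\neq j_2$ and $l_1\neq l_2$.

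Writing out the four incidence relations $t_a\equiv s_b+c_{j_b,l_a}\pmod r$ for the corners $(a,b)\in\{1,2\}^2$ and eliminating the offsets $s_1,s_2,t_1,t_2$, I would obtain the single condition
\[
c_{j_2,l_1}-c_{j_1,l_1}\equiv c_{j_2,l_2}-c_{j_1,l_2}\pmod r,
\]
which says precisely that the difference vector $\bd=\bc_{j_2}-\bc_{j_1}$ takes the same value in columns $l_1$ and $l_2$, i.e.\ $\bd$ is not multiplicity free. For the converse I would run this computation backwards: given $j_1\neq j_2$ and $l_1\neq l_2$ with $d_{l_1}\equiv d_{l_2}$, choose any offset $s_1$, set $t_a=s_1+c_{j_1,l_a}$ and $s_2=s_1-d_{l_1}$, and verify directly that all four corners carry a $1$. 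Since $j_1\neq j_2$ and $l_1\neq l_2$, the four incident nodes are genuinely distinct, so this is a bona fide $4$-cycle. Combining both directions, $C(H_E)$ has no $4$-cycle, hence girth $g\geq 6$, if and only if every $\bc_i-\bc_j$ is multiplicity free.

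The delicate points I would treat carefully, rather than gloss over, are fixing the circulant convention so that the incidence relation $t\equiv s+c\pmod r$ is correct, and checking in the converse that the four cycle-nodes are pairwise distinct (which is exactly what the conditions $j_1\neq j_2$ and $l_1\neq l_2$ guarantee). The algebraic elimination of the offsets itself is routine; the main conceptual step is the reduction establishing that any $4$-cycle must straddle two distinct block-rows and two distinct block-columns, since that is what lets the congruences collapse to a statement about a single difference vector.
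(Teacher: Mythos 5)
Your proof is correct, but note that the paper itself offers no proof of this statement: Theorem~\ref{type-I3} is quoted from Ref.~\cite{HI07QLDPC} (the paper explicitly says ``We state these theorems from \cite{HI07QLDPC} without proofs''), and the closest argument in the paper is the one-sentence proof of the Type-II analogue (Theorem~\ref{type-II-free}), which merely asserts that multiplicity freeness is equivalent to the absence of 4-cycles between layers $i$ and $j$. Your write-up supplies exactly the details that assertion hides: the reduction of girth $g\geq 6$ to the absence of 4-cycles via bipartiteness of the Tanner graph, the incidence relation $t\equiv s+c_{j,l}\pmod r$ for a circulant permutation block (which matches the paper's convention in (\ref{polyM})), the elimination of the offsets to obtain $c_{j_2,l_1}-c_{j_1,l_1}\equiv c_{j_2,l_2}-c_{j_1,l_2}\pmod r$, and the explicit construction of a rectangle of ones for the converse, with the distinctness of the four nodes checked. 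One point worth making explicit: for Type-I codes the diagonal difference $\bc_i-\bc_i$ is the all-zero vector (as the paper itself notes in its Type-II discussion), hence never multiplicity free when $L\geq 2$, so the theorem's ``for all $i$ and $j$'' must be read as ``for all $i\neq j$''; your structural observation that any 4-cycle in a Type-I code necessarily straddles two distinct block-rows and two distinct block-columns (because each block is a permutation matrix) is precisely what legitimizes this reading, and it is also the step that fails for Type-II codes, where binomial blocks permit 4-cycles inside a single layer and the differences $\bd_{ii}$ genuinely matter.
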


\subsubsection{Type-II QC-LDPC}

Take $r=16$, $J=3$, and $L=4$. The following is an example of a
Type-II (3,4)-regular QC-LDPC code:
\begin{equation}
\label{typeII}
\bH(X) = \left[\begin{array}{cccc} X+X^4 & 0 & X^7+X^{10} & 0 \\ X^5 & X^6 & X^{11} & X^{12} \\
0 & X^2+X^{9} & 0 & X^7+X^{13} \end{array}\right].
\end{equation}
The exponent matrix of (\ref{typeII}) is
\begin{equation}
H_E = \left[\begin{array}{cccc} (1,4) & \infty & (7,10) & \infty \\
5 & 6 & 11 & 12 \\
\infty & (2,9) & \infty & (7,13) \end{array}\right].
\end{equation}
Here we denote $X^{\infty}=0$.

The difference of two arbitrary rows of $H_E$ is defined similarly to
(\ref{diff}) with the following additional rules: (i) if for some
entry $c_{i,k}$ is $\infty$, then the difference of $c_{i,k}$ and any
other arbitrary term is again $\infty$; (ii) if the entries
$c_{i,k}$ and $c_{j,k}$ are both binomial, then the difference of
$c_{i,k}$ and $c_{j,k}$ contains four terms. In this example, we
have
\begin{eqnarray*}
\bd_{21} &=& \left((4,1),\infty,(4,1),\infty \right) \\
\bd_{31} &=& \left(\infty,\infty,\infty,\infty \right) \\
\bd_{32} &=& \left(\infty,(12,3),\infty,(11,1)\right) \\
\bd_{11} &=& \left((0,3,13,0),\infty,(0,3,13,0),\infty\right) \\
\bd_{22} &=& \left(0,0,0,0 \right) \\
\bd_{33} &=& \left(\infty,(0,9,7,0),\infty,(0,10,6,0)\right).
\end{eqnarray*}
The definition of {\bf multiplicity even} and {\bf multiplicity free}
is the same, except that we do not take $\infty$ into account. For
example, $\bd_{32}$ is multiplicity free, since there is no pair
with the same entry except $\infty$.
Unlike Type-I QC-LDPC codes, whose $\bd_{ii}$ is always the zero
vector, $\bd_{ii}$ of Type-II QC-LDPC codes can have non-zero
entries.  Therefore it is possible to have cycles of length 4 in a single
layer if $\bd_{ii}$ is not multiplicity free. Each layer is
a set of rows of size $r$ in the original parity check matrix $H$
that corresponds to one row of $H_E$. For example, $\bd_{11}$ is
multiplicity even, therefore the first layer of this Type-II regular
QC-LDPC parity check matrix contains 4-cycles.

In the following, we will generalize theorems
\ref{type-I1}-\ref{type-I3} from the previous section to include
the Type-II QC-LDPC case.
\begin{theorem}
\label{type-II-dual} $C(H_E)$ is a dual-containing Type-II regular
QC-LDPC code if and only if $\bc_i-\bc_j$ is multiplicity even for
all $i$ and $j$.
\end{theorem}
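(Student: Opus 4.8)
The plan is to reduce the dual-containing condition to a single polynomial identity per pair of block-rows and then read off ``multiplicity even'' coefficient by coefficient. Recall that $C(H)$ is dual-containing, $C(H)^\perp\subseteq C(H)$, exactly when every row of $H$ lies in the null space of $H$, i.e. when $HH^T=0$ over $\bbF_2$. Writing $H$ as its $J\times L$ array of $r\times r$ circulant blocks $H_{j,l}$, the $(j,j')$ block of $HH^T$ equals $\sum_{l=1}^{L}H_{j,l}H_{j',l}^T$ and is itself circulant, so $HH^T=0$ is equivalent to the $J^2$ conditions that each such block vanishes.

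I would then translate to the polynomial picture of Proposition~\ref{iso}. The one ingredient not already quoted in the excerpt is that, under the isomorphism sending a circulant matrix to its first-row polynomial, transposition corresponds to the substitution $X\mapsto X^{-1}$ modulo $X^r-1$; this follows directly by comparing the first row of $M^T$ (the reversed first column of $M$) with the coefficients of $M(X^{-1})$. Hence $H_{j',l}^T$ is represented by $\bar h_{j',l}(X):=h_{j',l}(X^{-1})$, and the $(j,j')$ block of $HH^T$ is represented by
\begin{equation*}
P_{j,j'}(X)=\sum_{l=1}^{L}h_{j,l}(X)\,\bar h_{j',l}(X)\pmod{X^r-1}.
\end{equation*}
Dual-containment is then the statement that $P_{j,j'}(X)=0$ for all $j,j'$.

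The core step is to match the exponents occurring in $P_{j,j'}$ with the entries of the difference $\bd_{jj'}=\bc_j-\bc_{j'}$. Since each entry of $\bH(X)$ is zero, a monomial, or a binomial, I would expand $h_{j,l}(X)\bar h_{j',l}(X)$ case by case: if either factor is zero the contribution is empty, matching the convention that an $\infty$ entry is dropped from $\bd_{jj'}$; if both are monomials $X^a,X^b$ the product is the single term $X^{a-b}$; and if both are binomials the product is the four terms $X^{a_j-a_{j'}},X^{a_j-b_{j'}},X^{b_j-a_{j'}},X^{b_j-b_{j'}}$ (the mixed monomial/binomial case yielding the expected two terms). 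These are precisely the values placed in position $l$ of $\bd_{jj'}$ by the Type-II difference rules. Summing over $l$, the coefficient of $X^m$ in $P_{j,j'}$ is the number of entries of $\bd_{jj'}$ equal to $m$, reduced modulo $2$; so $P_{j,j'}=0$ over $\bbF_2$ iff every value occurs an even number of times, i.e. iff $\bd_{jj'}$ is multiplicity even. Quantifying over all $j,j'$ yields the theorem and recovers Theorem~\ref{type-I1} as the special case where all entries are monomials.

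The step to watch is the diagonal $j=j'$, which is where Type-II genuinely differs from Type-I. For a self-product of a binomial $X^a+X^b$ one gets $(X^a+X^b)(X^{-a}+X^{-b})=X^{a-b}+X^{b-a}$, the two constant terms cancelling over $\bbF_2$; correspondingly the four diagonal entries $(0,a-b,b-a,0)$ already carry a cancelling pair of $0$'s. The main care is therefore to verify that this pairwise-cancellation-mod-$2$ bookkeeping keeps the vanishing of $P_{j,j}$ in exact lockstep with the multiplicity-even count on $\bd_{jj}$, including the $\infty$ convention; once that correspondence is pinned down the equivalence is immediate.
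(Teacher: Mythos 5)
Your proposal is correct and takes essentially the same route as the paper: reduce dual-containment to $HH^T=0$, pass to the polynomial representation where the transposed circulant blocks are given by $h_{j,l}(X)\mapsto h_{j,l}(X^{-1})$ (the paper writes this as $X^k\mapsto X^{r-k}$ modulo $X^r-1$), and identify the exponents of $\hat h_{i,j}(X)=\sum_l h_{i,l}(X)h^t_{l,j}(X)$ with the entries of $\bc_i-\bc_j$ so that vanishing over $\bbF_2$ is exactly the multiplicity-even condition. In fact you supply the coefficient-by-coefficient bookkeeping (including the diagonal $j=j'$ case and the $\infty$ convention) that the paper's proof asserts without detail.
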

\begin{proof}
Let $\bH(X)=[h_{j,l}(X)]_{j\in[J],l\in[L]}$ be the polynomial parity
check matrix associated with a Type-II $(J,L)$-regular QC-LDPC
parity check matrix $H$. Denote the transpose of $\bH(X)$ by
$\bH(X)^T=[h^t_{l,j}(X)]_{l\in[L],j\in[J]}$, and we have
\begin{equation}
\begin{split}
h^{t}_{l,j}(X)=\begin{cases} 0  & \text{if} \ h_{j,l}(X)=0 \\
X^{r-k} & \text{if} \ h_{j,l}(X)=X^k  \\
X^{r-k_1}+X^{r-k_2} & \text{if} \ h_{j,l}(X)=X^{k_1}+X^{k_2}
\end{cases}.
\end{split}
\end{equation}
Let $\hat{\bH}(X)=\bH(X)\bH(X)^T$, and let the $(i,j)$-th component
of $\hat{\bH}(X)$ be $\hat{h}_{i,j}(X)$. Then
\begin{equation}
\label{Hhat}%
\hat{h}_{i,j}(X)=\sum_{l\in[L]} h_{i,l}(X)h^t_{l,j}(X).
\end{equation}
The condition that $\bd_{ij}$ is multiplicity even implies
that $\hat{h}_{i,j}(X)=0$ modulo $X^r-1$, and vice versa.
\end{proof}

\begin{theorem}
\label{type-II-free} A necessary and sufficient condition for a
Type-II regular QC-LDPC code $C(H_E)$ to have girth $g\geq 6$ is
that $\bc_{i}-\bc_j$ be multiplicity free for all $i$ and $j$.
\end{theorem}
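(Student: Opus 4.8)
The plan is to reduce the girth-$\geq 6$ condition to a purely combinatorial statement about the row-difference vectors $\bd_{ij}=\bc_i-\bc_j$, mirroring the logic used in \thm{type-II-dual} but now tracking \emph{where} the cancellations occur rather than whether the product vanishes. Recall that a $4$-cycle in $H$ corresponds to four distinct ``ones'' alternately sharing rows and columns. In the quasi-cyclic picture each such cycle lives in a pair of layers (indexed by rows $i,j$ of $H_E$, possibly with $i=j$) and a pair of block-columns (indexed by $l,l'$), and it is detected precisely by a repeated value in the combined difference data. So first I would make explicit the dictionary: a $4$-cycle involving layers $i,j$ and columns $l,l'$ exists if and only if two of the exponents appearing in $\bd_{ij}$ (at positions $l$ and $l'$, where for Type-II each binomial entry contributes several exponents as described before the theorem) coincide modulo $r$.

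Second, I would make this dictionary rigorous by the same $\hat\bH(X)=\bH(X)\bH(X)^T$ computation already set up in the proof of \thm{type-II-dual}. There, the key identity \eqref{eq:Hhat} expresses $\hat h_{i,j}(X)=\sum_{l\in[L]}h_{i,l}(X)h^t_{l,j}(X)$, and each product $h_{i,l}(X)h^t_{l,j}(X)$ produces monomials whose exponents are exactly the entries of $\bd_{ij}$ at position $l$ (taken mod $r$). A value repeated in $\bd_{ij}$ corresponds to two such monomials with equal exponent; over $\bbF_2$ these cancel, but combinatorially the two contributing ``ones'' in $H$ form the closing pair of a $4$-cycle. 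Thus I would argue that $\bd_{ij}$ failing to be multiplicity free (for some $i,j$, including the diagonal case $i=j$ which the preceding discussion flagged as the source of single-layer $4$-cycles) is equivalent to the existence of a $4$-cycle. Since girth $g\geq 6$ means exactly the absence of $4$-cycles (a QC-LDPC parity-check matrix with distinct monomial/binomial supports has no $2$-cycles by construction), this gives the claimed equivalence.

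The necessity direction (girth $\geq 6 \Rightarrow$ all $\bd_{ij}$ multiplicity free) follows by contraposition from the forward half of the dictionary, and sufficiency follows from the reverse half; I would write both as the two implications of the same exponent-matching criterion. Care is needed with the Type-II-specific bookkeeping: a binomial entry $h_{i,l}(X)=X^{k_1}+X^{k_2}$ contributes \emph{two} exponents per column, so a difference between two binomial entries yields four terms (as in $\bd_{11}$ above), and the diagonal differences $\bd_{ii}$ can be nonzero. I would handle this by treating each monomial term of a (possibly binomial) entry as a separate contributor to the incidence structure, so that ``multiplicity free'' is understood over the full multiset of exponents per position, consistent with the $\infty$-exclusion convention stated before the theorem.

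The main obstacle will be the diagonal/intra-layer case and the multiplicity bookkeeping for binomials. For Type-I codes the diagonal $\bd_{ii}$ is identically zero and contributes no information, so \thm{type-I3} only ranged over $i\neq j$; here the possibility of a single-layer $4$-cycle (witnessed by $\bd_{ii}$ not being multiplicity free, as in the $\bd_{11}$ example) means the equivalence must genuinely include $i=j$, and I must verify that a repeated exponent within one binomial-rich layer really does close a combinatorial $4$-cycle in $H$ rather than a degenerate shorter path. Once the incidence correspondence is stated carefully enough to rule out such degeneracies, the algebra is a direct extension of the Type-I argument and of the proof just given for \thm{type-II-dual}.
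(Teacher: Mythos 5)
Your proposal takes essentially the same approach as the paper: the paper's entire proof is a single sentence asserting exactly the dictionary you develop, namely that a repeated (non-$\infty$) entry in $\bd_{ij}$ corresponds precisely to a 4-cycle between layers $i$ and $j$. Your treatment is in fact more careful than the paper's, since you flag the genuine subtlety the paper glosses over—that for $i=j$ the trivial self-differences of a binomial term with itself always produce repeated zeros in $\bd_{ii}$, so these degenerate pairs must be excluded from the multiplicity count for the stated equivalence to hold.
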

\begin{proof}
The condition that $\bc_{i}-\bc_j$ is multiplicity free for all $i$
and $j$ guarantees that there is no 4-cycle between layer $i$ and layer
$j$, and vice versa.
\end{proof}

\begin{theorem}
There is no dual-containing QC-LDPC having girth $g\geq 6$.
\end{theorem}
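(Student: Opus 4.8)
The plan is to combine the two characterizations just established. By Theorem~\ref{type-II-dual}, a dual-containing Type-II regular QC-LDPC code requires $\bc_i-\bc_j$ to be multiplicity even for every pair $(i,j)$, while by Theorem~\ref{type-II-free} girth $g\geq 6$ requires $\bc_i-\bc_j$ to be multiplicity free for every pair $(i,j)$. So a code that is both dual-containing and of girth at least $6$ would have to make each difference vector $\bc_i-\bc_j$ simultaneously multiplicity even and multiplicity free. I would first isolate the elementary combinatorial fact that these two properties are incompatible unless the difference vector contains no finite entries at all: if some finite value occurs in $\bc_i-\bc_j$, then multiplicity even forces it to occur at least twice, which immediately violates multiplicity free. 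Hence the only way to satisfy both conditions is for $\bc_i-\bc_j$ to consist entirely of $\infty$ entries (recall that $\infty$ is excluded from the multiplicity count).

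The second step is to rule this out for an honest regular code. An $\infty$ in position $k$ of $\bc_i-\bc_j$ means that at least one of the circulant blocks $h_{i,k}(X)$, $h_{j,k}(X)$ is the zero block; so the statement that $\bc_i-\bc_j$ is all $\infty$ says precisely that layers $i$ and $j$ have disjoint column supports. I would then argue that a genuine regular code cannot have \emph{all} pairs of layers support-disjoint: if at least two of the circulant blocks in some block column $k$ of $\bH(X)$ are nonzero, then there is a pair $i\neq j$ with both $h_{i,k}(X)$ and $h_{j,k}(X)$ nonzero, and for that pair $\bc_i-\bc_j$ has a finite entry in position $k$, contradicting the conclusion of the previous paragraph. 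This yields the theorem.

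The step I expect to be the main obstacle is pinning down this overlap claim so that it genuinely applies to the codes under consideration rather than to some degenerate object. One must be careful with binomial blocks and with the diagonal differences $\bc_i-\bc_i$: a lone binomial concentrated in a single layer has block-column weight $2$ yet contributes only $\infty$ to every off-diagonal $\bc_i-\bc_j$, so a naive ``column weight $\geq 2$'' count is not by itself sufficient, and the regularity hypothesis is best phrased as the requirement that every block column meets at least two distinct layers. Once that is fixed the contradiction is immediate, and the argument specializes correctly: for a Type-I code every entry of $\bH(X)$ is a nonzero monomial, so no off-diagonal difference can contain any $\infty$, and the incompatibility of multiplicity even with multiplicity free then recovers Theorem~\ref{type-I2} as the special case in which the difference vectors are forced to be empty, i.e. of length zero, which is impossible for $L\geq 2$.
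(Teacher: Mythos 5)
Your overall route is the same as the paper's: the paper's entire proof is precisely the two-line combination of Theorem~\ref{type-II-dual} and Theorem~\ref{type-II-free}. What you add is the observation that this combination is not, by itself, a contradiction: a difference vector consisting entirely of $\infty$ entries is \emph{vacuously} both multiplicity even and multiplicity free, since $\infty$ is excluded from the multiplicity count. That observation is correct, and it exposes a genuine gap in the paper's own argument, which silently assumes that ``multiplicity even for all $i,j$'' is incompatible with ``multiplicity free for all $i,j$.''

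Where your proposal falls short is in how you close that gap. You rule out the all-$\infty$ configuration by adding the hypothesis that every block column meets at least two distinct layers, and, as you yourself concede, this excludes legitimate regular codes: a $(2,L)$-regular Type-II code in which every block column is covered by a single binomial block has pairwise support-disjoint layers, satisfies the paper's regularity definition, and is simply not addressed by your argument. So what you prove is a restricted version of the stated theorem, not the theorem itself. A closure that needs no extra hypothesis stays inside the paper's stated theorems, whose conditions are quantified over \emph{all} $i$ and $j$, including $i=j$: in a regular code every layer is nonzero, so the diagonal difference $\bd_{ii}$ always contains a finite entry (a monomial block contributes a $0$ to $\bd_{ii}$; a binomial block contributes two $0$'s together with $\pm d$). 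If the code is dual-containing, Theorem~\ref{type-II-dual} makes $\bd_{ii}$ multiplicity even, so that finite entry occurs at least twice; hence $\bd_{ii}$ is not multiplicity free, and Theorem~\ref{type-II-free} yields a 4-cycle. This handles exactly the support-disjoint configurations you had to exclude. (One caveat, which afflicts the paper more than you: Theorem~\ref{type-II-free} is itself delicate for $i=j$, because the trivial self-differences $0$ need not correspond to genuine 4-cycles --- for instance $J=1$, $\bH(X)=[\,X^{a}\ \ X^{b}\,]$ gives a dual-containing code whose Tanner graph has no 4-cycle at all --- so the theorem as stated really requires $J\geq2$ and a sharper treatment of diagonal differences; but granting the paper's theorems at face value, the diagonal argument is the complete proof.)
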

\begin{proof}
This proof follows directly from theorem \ref{type-II-dual} and
theorem \ref{type-II-free}. If the Type-II regular QC-LDPC code is
dual-containing, then by theorem \ref{type-II-dual}, $\bc_i-\bc_j$
must be multiplicity even for all $i$ and $j$. However, theorem
\ref{type-II-free} says that this QC-LDPC must contain cycles of
length 4.
\end{proof}

\section{Construction of quantum QC-LDPC codes from classical QC-LDPC codes}
\label{III}

It has been shown that any classical linear binary or quaternary code
can be used to construct a corresponding entanglement-assisted
quantum error-correcting code \cite{BDH06,BDH06IEEE}.

\begin{theorem}
\label{ebit} Let $C(H)$ be a binary classical $[n,k,d]$ code with
parity check matrix $H$. We can obtain a corresponding
$[[n,2k-n+c,d;c]]$ EAQECC, where $c = {\rm rank}(H H^T)$ is the
number of ebits needed.
\end{theorem}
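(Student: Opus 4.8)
The plan is to realize $C(H)$ through the generalized CSS construction of \cite{BDH06} and then read the parameters off the entanglement-assisted formalism. First I would build the $2(n-k)\times 2n$ binary symplectic check matrix
\[
\tilde H=\begin{pmatrix} H & 0\\ 0 & H\end{pmatrix},
\]
reading the left $n$ columns as the $X$-part and the right $n$ columns as the $Z$-part of each generator, so that the top block supplies $n-k$ purely-$X$ generators and the bottom block $n-k$ purely-$Z$ generators. Since $H$ is the parity check matrix of an $[n,k,d]$ code it has full row rank $n-k$, hence $\tilde H$ has $m=2(n-k)$ independent rows: the quantum code is specified by $m$ independent (generally non-commuting) generators on $n$ qubits.

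Next I would compute the commutation data. With the symplectic form $J=\left(\begin{smallmatrix}0&I\\ I&0\end{smallmatrix}\right)$, two generators with rows $u,v$ commute iff $uJv^{T}=0$, and a direct multiplication gives the Gram matrix
\[
\tilde H\,J\,\tilde H^{T}=\begin{pmatrix} 0 & HH^{T}\\ HH^{T} & 0\end{pmatrix}.
\]
Because this block-antidiagonal matrix has rank $2\,\rank(HH^{T})$, the symplectic Gram–Schmidt decomposition of \cite{BDH06,BDH06IEEE} splits the generated group into a mutually commuting isotropic part together with exactly $c=\rank(HH^{T})$ anticommuting symplectic pairs. Each such pair is absorbed by one pre-shared ebit, so the construction consumes precisely $c=\rank(HH^{T})$ ebits.

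It then remains to count logical qubits and to control the distance. The entanglement-assisted rule states that $m$ independent generators with a rank-$2c$ Gram matrix on $n$ qubits leave $n-m+c$ logical qubits; substituting $m=2(n-k)$ gives $n-2(n-k)+c=2k-n+c$, as claimed (and when $C$ is dual-containing, $HH^{T}=0$, $c=0$, recovering the ordinary $[[n,2k-n]]$ CSS code). For the distance I would argue as in the standard CSS case: a Pauli error with $X$-part $x$ and $Z$-part $z$ escapes detection only if $Hx=0$ and $Hz=0$, i.e. $x,z\in C=\ker H$, so the minimum weight of a nontrivial undetectable error is the classical minimum distance $d$, giving quantum distance $d$.

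The main obstacle is not a single hard estimate but the correct interface with the entanglement-assisted formalism: I must verify that $\rank(\tilde H\,J\,\tilde H^{T})$ is indeed the invariant counting ebits, so that the canonical decomposition of \cite{BDH06} applies here verbatim, and that adjoining the $c$ receiver-side ebit halves turns the non-commuting generator set into a genuine abelian stabilizer on $n+c$ qubits without lowering the effective weight of logical operators. Confirming $\rank\begin{pmatrix}0&HH^{T}\\ HH^{T}&0\end{pmatrix}=2\rank(HH^{T})$ and that the symplectic pairs do not shorten the minimum distance are the two points where I would take the most care.
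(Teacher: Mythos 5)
Your proposal is correct and is essentially the argument the paper relies on: the paper's ``proof'' of this theorem is nothing but the citation to Ref.~\cite{HBD07}, and that reference establishes the statement exactly along your lines --- the CSS-type symplectic check matrix built from two copies of $H$, the Gram matrix $\tilde{H}J\tilde{H}^{T}$ of rank $2\rank(HH^{T})$, the symplectic Gram--Schmidt decomposition yielding $c=\rank(HH^{T})$ anticommuting pairs (one ebit each), the parameter count $n-m+c=2k-n+c$, and the standard CSS distance argument. The two points you flag for extra care both hold as you expect: the block-antidiagonal rank identity is immediate over $\bbF_2$, and adjoining Bob's ebit halves cannot shorten the distance because channel errors act only on Alice's $n$ qubits, so detectability is decided by commutation with the generators restricted to those qubits.
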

\begin{proof}
See \cite{HBD07}.
\end{proof}

\noindent {\bf Remark}\,\, An $[[n,k',d;c]]$ EAQECC encodes $k'$
logic qubits into $n$ physical qubits with the help of $c$ ebits
($c$ copies of maximally entangled states). We define the \emph{net
rate} of such an EAQECC to be $\frac{k'-c}{n}$. The definition of net
rate only takes account of the effective qubits that are sent
through a channel if we trade the stronger resource of pure identity
qubit channel with the weaker resource of pure entanglement. When
$c=0$, this quantity is equal to the ``rate'' of a standard QECC.
Another way to think of this is that the net rate is the rate we can
achieve if we ``borrow'' $c$ ebits in order to send the codeword,
then ``pay them back'' by using $c$ communication qubits to
establish a new $c$ ebits.  Because this catalytic mode makes no
net consumption of ebits, it is quite reasonable to compare the
net rate of an EAQECC to the rate of a standard QECC.

In the following, we will consider conditions that will give us
$(J,L)$-regular QC-LDPC codes $C(H)$ with girth $g\geq 6$ and with
the rank of $H H^T$ as small as possible. Let
$\hat{\bH}(X)=\bH(X)\bH(X)^T$ be the polynomial representation of $H
H^T$. In general, $\hat{\bH}(X)$ represents a square symmetric
matrix $\hat{H}$ with size $Jr\times Jr$ that contains $J^2$
circulant $r\times r$ matrices represented by $\hat{h}_{i,j}(X)$ as
defined in (\ref{Hhat}). Next, we provide two examples to illustrate
two different ways of minimizing the rank of the square symmetric
matrix represented by $\hat{\bH}(X)$. This would minimize the number
of ebits when we use the classical code $C(H)$ to construct the EAQECC.


The first method is to make the matrix $\hat{H}=HH^T$ become a
circulant matrix with a small rank. This can be achieved by properly choosing
$\bH(X)$ such that the elements $\hat{h}_{i,j}(X)$ in $\hat{\bH}(X)$ satisfying:
\begin{equation}%
\label{cond_cir}%
\hat{h}_{i,j}(X)=\hat{h}_{i+1,j+1}(X),
\end{equation}%
for $i,j=0,1,\cdots,J-2.$ First notice that each polynomial matrix $\hat{h}_{i,j}(X)$ in $\hat{\bH}(X)$ is a circulant matrix of size $r\times r$, and the polynomial matrix $\hat{\bH}(X)$ contains $J^2$ such circulant matrices. Since condition (\ref{cond_cir}) guarantees $\hat{H}$ is itself  circulant, $\hat{\bH}(X)$ can be represented by some polynomial $g(X)$ in $\bbF_2[X]/\langle X^{Jr}-1\rangle$. The rank $\kappa$ of $\hat{H}$ can then be read off by lemma~\ref{rank}: If $\gcd(g(X),X^{Jr}-1)=K(X)$,
and the degree of $K(X)=k$, then $\kappa=Jr-k$.

Let's look at an example of this type using a classical Type-I
QC-LDPC code. Consider $r=16$, $J=3$, $L=8$, and the following
polynomial parity check matrix $\bH(X)$:
\begin{equation}
\label{ex1}
\bH(X)=\left[\begin{array}{cccccccc} X & X & X & X & X & X & X & X \\
X & X^2 & X^3 & X^4 & X^5 & X^6 & X^7 & X^8 \\
X & X^3 & X^5 & X^7 & X^9 & X^{11} & X^{13} & X^{15}
\end{array}\right].
\end{equation} 
Simple calculation shows that
\begin{equation}%
\label{ex1_hx}
\hat{h}_{i,j}(X)=\begin{cases}0, &\text{$i=j$},  \\
\sum_{k=0}^{7}X^k,  & i=j+1 \\ \sum_{k=0}^{7}X^{2k}. & i=j+2
\end{cases}
\end{equation}
Since (\ref{ex1_hx}) satisfies (\ref{cond_cir}), $\hat{\bH}(X)$
represents a circulant matrix, and the polynomial associated with
$\hat{H}$ is
\[
g(X) = X^{16}\left(\sum_{k=0}^{7}X^k\right) +
X^{32}\left(\sum_{k=0}^{7}X^{2k}\right).
\]
The degree of $\gcd(g(X),X^{48}-1)=30$, therefore by
lemma~\ref{rank}, the number of ebits that was needed to construct
the corresponding quantum code is only 18. Actually, (\ref{ex1})
gives us a $[[128,58,6;18]]$ EAQECC, and we will refer to this example
as ``Ex1'' later in section \ref{IV}.  The {\it net rate} of this code is
$(k-c)/n=40/128$.

{\bf Remark}\,\, The parity check matrix $H$ of
(\ref{ex1}) gives a $[128,84,6]$ classical code. The rate of the
QC-LDPC code is actually slightly higher than $(L-J)/L$,
since $H$ usually contains linearly dependent rows. For example,
each layer of $H$ contains the all-one vector; therefore, we can
find at least ($J-1$) linearly dependent rows in $H$.

The second method is to minimize the rank of each circulant matrix
inside $\hat{H}$, that is, to minimize the rank of the circulant
matrix represented by $\hat{h}_{i,j}(X)$, $\forall i,j$.
Let the rank of $\hat{H}$ be $\kappa$. Then
\begin{equation}
\label{u_rank}%
\kappa \leq \sum_{i=1}^{J} \max_{j\in[J]}{\kappa_{i,j}}.
\end{equation}
This upper bound is not tight in general, e.g., when $L$ is odd in
the Type-I $(J,L)$-regular QC-LDPC codes, the bound of
(\ref{u_rank}) gives $Jr$, which is equal to the number of rows of
$\hat{H}$. This is because $\kappa_{i,i}=r$ for every $i$. However,
with certain restrictions (e.g. even $L$), we can obtain a
reasonable upper bound for $\kappa$.

\begin{theorem}
\label{bound} Given a $(J,L)$-regular QC-LDPC code with polynomial
parity check matrix $\bH(X)$ such that the exponent matrix of
$\hat{\bH}(X)=\bH(X)\bH(X)^T$ is multiplicity free, if whenever
$\hat{h}_{i,j}(X)\neq 0$, $\gcd(\hat{h}_{i,j}(X),X^r-1)\neq1$, then
the rank $\kappa$ is upper bounded by $J(r-L+1)$.
\end{theorem}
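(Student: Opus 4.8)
The plan is to bound each of the $J^2$ circulant blocks $\hat h_{i,j}(X)$ of $\hat H = HH^T$ separately and then assemble these estimates with the block-rank inequality (\ref{u_rank}), namely $\kappa \le \sum_{i=1}^{J}\max_{j\in[J]}\kappa_{i,j}$. Thus it is enough to prove the uniform bound $\kappa_{i,j}\le r-L+1$ for every $i,j$: taking the maximum over $j$ in each of the $J$ block-rows and summing then yields $\kappa\le J(r-L+1)$. For a zero block, $\hat h_{i,j}(X)=0$, the claim is trivial since $\kappa_{i,j}=0\le r-L+1$ (as $L\le r+1$), so all of the work is in the nonzero blocks.

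For a nonzero block I would first determine its weight. Expanding $\hat h_{i,j}(X)=\sum_{l\in[L]}h_{i,l}(X)h^t_{l,j}(X)$ as in (\ref{Hhat}), the $(J,L)$-regularity (each polynomial row of $\bH(X)$ has total weight $L$) together with the transpose rule for $h^t_{l,j}$ governs how many monomial terms the sum produces, while the hypothesis that the exponent matrix of $\hat{\bH}(X)$ is multiplicity free guarantees that these terms are pairwise distinct and hence none cancel over $\bbF_2$. The target of this step is that every surviving nonzero $\hat h_{i,j}(X)$ is a multiplicity-free polynomial of weight exactly $L$.

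With the weight fixed at $L$ I would then use the second hypothesis: whenever $\hat h_{i,j}(X)\ne0$ we have $\gcd(\hat h_{i,j}(X),X^r-1)\ne1$, so the block is rank-deficient with a nontrivial gcd. Feeding weight $L$ and this gcd condition into Corollary~\ref{kappa} (taken with $p=L$, so that $r=Lq$ and the minimal-degree weight-$L$ divisor of $X^r-1$ forces the rank cap of item 2 of Theorem~\ref{method1}) gives precisely $\kappa_{i,j}\le r-L+1$. This also explains why the hypothesis is natural: a nonzero constant diagonal block (which occurs, e.g., for Type-I codes with $L$ odd) has gcd $1$ and would be full rank, exactly the case the condition excludes. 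Substituting the uniform bound $\kappa_{i,j}\le r-L+1$ into (\ref{u_rank}) finishes the proof.

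I expect the weight computation in the nonzero blocks to be the crux: showing that multiplicity-freeness combined with the regular row weight pins the weight at exactly $L$ rather than merely ensuring ``no cancellation'' requires tracking the transpose products carefully, especially in the Type-II case where a binomial multiplied by its own transpose contributes cancelling constant terms and binomial-by-binomial products generate four terms. Everything downstream---rank-deficiency from the gcd condition and the $r-L+1$ cap from Corollary~\ref{kappa}---is then routine. One should also read the statement in the regime $1<L<r$ with $L\mid r$ presupposed by Corollary~\ref{kappa} (as in the worked examples with $r=16$ and $L\in\{4,8\}$), since the passage from weight $L$ to the rank bound relies on that divisibility.
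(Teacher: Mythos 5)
Your proposal follows the paper's own proof essentially step for step: multiplicity-freeness of the exponent matrix of $\hat{\bH}(X)$ pins the row weight of each nonzero circulant block at $L$, the hypothesis $\gcd(\hat{h}_{i,j}(X),X^r-1)\neq 1$ feeds Corollary~\ref{kappa} to give $\kappa_{i,j}\le r-L+1$, and the block inequality (\ref{u_rank}) then yields $\kappa\le J(r-L+1)$. If anything, your write-up is more careful than the paper's, which simply asserts the weight-$L$ claim and leaves implicit the Type-II cancellation bookkeeping and the divisibility assumption you flag.
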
%
\begin{proof}%
Let $\hat{h}_{i,j}$ be the circulant matrix associated with the
polynomial $\hat{h}_{i,j}(X)$. Since the exponent matrix of
$\hat{\bH}(X)$ is multiplicity free, then the weight of each row
vector of $\hat{h}_{i,j}$ is $L$ whenever $\hat{h}_{i,j}(X)\neq0$.
By corollary~\ref{kappa}, $\kappa_{i,j} \leq r-L+1$. Therefore
\begin{equation}%
\label{tight_bound}%
\kappa\leq \sum_{i=1}^{J} \max_{j \in[J]} \kappa_{i,j} \leq J(r-L+1).
\end{equation}%
\end{proof}

Define the entanglement consumption rate to be $\kappa/n$.
Since the rank decides the number of EPR pairs that are required by the corresponding entanglement-assisted quantum code, we want $\kappa/n$ to be as small as possible.
It is easy to see that this bound becomes tighter when we pick $L$ much larger than $J$:
\begin{equation*}
\frac{\kappa}{n}\leq\frac{J(r-L+1)}{n}\leq \frac{Jr}{Lr} =\frac{J}{L}.
\end{equation*}

In the following, we present an example showing that the restriction
in theorem \ref{bound} is achievable. This example comes from a
classical Type-II QC-LDPC code. Consider $r=16$, $J=3$, $L=8$, and
the following polynomial parity check matrix $\bH(X)$:
\begin{widetext}
\begin{equation}
\label{ex2}%
\bH(X) = \left[\begin{array}{cccccccc} X+X^2 & 0 & X+X^4 & 0 & X+X^6 & 0 & X+X^8 & 0 \\
X^5 & X^5 & X^6 & X^6 & X^7 & X^7 & X^8 & X^8 \\
0 & X+X^2 & 0 & X+X^4 & 0 & X+X^6 & 0 & X+X^8 \end{array}\right].
\end{equation}
\end{widetext}%
Simple calculation shows that
\begin{equation}%
\label{ex2_hx}%
\hat{h}_{i,j}(X) = \begin{cases}0, &(i,j)=(2,2),(1,3),\text{or} (3,1) \\
\sum_{k=0}^{7}X^{1+2k}, &(i,j)=(1,1),(3,3) \\
\sum_{k=0}^{7}X^k. &(i,j)=(2,1),(2,3)
\end{cases}%
\end{equation}%
In this example, (\ref{ex2_hx}) satisfies the statement given in
theorem \ref{bound}. Therefore, the rank of $\hat{\bH}(X)$ is upper
bounded by $27$. The polynomial parity check matrix in
(\ref{ex2}) gives a $[[128,58,6;18]]$ quantum QC-LDPC code, and we
will refer to this example as ``Ex2'' in section \ref{IV}.  It also has net
rate $(k-c)/n=40/128$, just like Ex1.

{\bf Remark}\,\, Though no general guidelines of using
these two methods to construct the desired polynomial parity check
matrix $\bH(X)$ are given, examples can be obtained with a simple
search even when the code length is very long. This is because if we
want to construct classical QC-LDPC codes with long length, we will
increase the parameter $r$ rather than $L$ and $J$. Choosing larger
$r$ increases the distance property of the QC-LDPC codes
\cite{SV04}. Therefore, in general, $L$ and $J$ are not very big,
which makes the search not difficult to perform.

\section{Performance}
\label{IV}%
In this section, we compare the performance of the quantum LDPC codes
given in Sec.~\ref{III} to simulation results for two constructions currently
available in the dual-containing quantum LDPC codes literature. The
criterion for comparison between these quantum LDPC codes is the net
rate (this is equal to the definition of ``rate'' for standard
QECCs). If the quantum codes are constructed by the CSS
construction  from classical QC-LDPC codes with same parameters $J$
and $L$, the corresponding standard QECCs and EAQECCs have almost
same rate, which equals $\frac{L-2J}{L}$. Slight differences
in net rates are possible because of the possibility of different
numbers of linearly dependent rows in two parity check matrices of
classical QC-LDPC codes.

The authors in \cite{MMM04QLDPC} proposed four dual-containing constructions of quantum LDPC codes. Among those constructions, we use one construction in particular (Ref. \cite{MMM04QLDPC} called it construction ``B'') as a benchmark, since it performed the best of their constructions, especially in the low quantum rate (less than 0.5) and medium code length (less than 10000 qubits) case. Comparison of the performance of all these 4 constructions is illustrated in Fig.~17 on page~35 of \cite{MMM04QLDPC}. The construction is as follows: take an $n/2 \times n/2$ cyclic matrix $C$ with row weight $L/2$, and define
\[
H_0=[C,C^T].
\]
We then delete some rows from $H_0$ to obtain a matrix $H$ with $m$
rows. It is easy to verify that $H$ is dual-containing. Therefore, by
the CSS construction \cite{Ste96,shor95}, we can obtain standard
quantum LDPC codes of length $n$. The advantage of this construction
is that the choice of $n,m$, and $L$ is completely flexible;
however, the column weight $J$ is not fixed. We picked $n=128$,
$m=48$, and $L=8$, and called this [[128,32]] quantum LDPC code
``Ex-MacKay.''  It has rate $k/n=32/128$, lower than the net rate of
Ex1 and Ex2.

The second example is a standard quantum LDPC code that was
constructed from classical QC-LDPC codes \cite{HI07QLDPC}. This
construction is the first example of standard quantum LDPC codes
with no 4-cycles.
\begin{theorem}
\label{HIcont}%
Let $P$ be an integer which is greater than 2 and $\sigma$ an
element of $\bbZ_P^*:=\{z:z^{-1} \text{exists}\}$ with
$ord(\sigma)\neq |\bbZ_P^*|$, where
$ord(\sigma):=\min\{m>0|\sigma^m=1\}$ and $|X|$ means the
cardinality of a set $X$. If we pick any $\tau\in\bbZ_P^*=
\{1,\sigma,\sigma^2,\cdots\}$, define
\begin{eqnarray*}
c_{j,l}&:=& \begin{cases} \sigma^{-j+l} & 0\leq l < L/2
\\-\tau\sigma^{j-1+l} & L/2\leq l < L \end{cases} \\
d_{k,l}&:=& \begin{cases} \tau\sigma^{-k-1+l} & 0\leq l<L/2 \\
-\sigma^{k+l} &L/2\leq l < L \end{cases},
\end{eqnarray*}
and define the exponent matrix $H_C$ and $H_D$ as
\[
H_C=[c_{j,l}]_{j\in[J],l\in[L]},\ \
H_D=[d_{k,l}]_{k\in[K],l\in[L]},
\]
where $L/2=ord(\sigma)$ and $1\leq J,K \leq L/2$, then $H_C$ and
$H_D$ can be used to construct quantum QC-LDPC codes with girth at
least 6.
\end{theorem}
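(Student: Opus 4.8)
The plan is to reduce the claim to the two combinatorial criteria already established for row differences of exponent matrices, and then to settle both by a single coset computation in the cyclic group $\langle\sigma\rangle\subseteq\bbZ_P^*$. Write $H_C$ (resp.\ $H_D$) for the binary QC parity-check matrix whose $r=P$ circulant blocks carry the exponents $c_{j,l}$ (resp.\ $d_{k,l}$), so the code length is $n=LP$. The phrase ``quantum QC-LDPC code with girth at least $6$'' unpacks into two requirements: (i) the CSS commutation condition $H_C H_D^T=0 \pmod 2$, so that the pair $(H_C,H_D)$ defines a standard quantum code without any single matrix needing to be dual-containing; and (ii) that each of $C(H_C)$ and $C(H_D)$ has girth $g\ge 6$, which is what makes the two Tanner graphs used in sum-product decoding free of $4$-cycles. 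By the polynomial computation in the proof of theorem~\ref{type-II-dual} (applied to the product $H_C H_D^T$ instead of a single matrix), requirement (i) is equivalent to every cross difference $\bc_j-\bd_k$ being multiplicity even; and by theorem~\ref{type-II-free} (equivalently theorem~\ref{type-I3}, since all entries here are monomials) requirement (ii) is equivalent to every within-matrix difference $\bc_j-\bc_{j'}$ and $\bd_k-\bd_{k'}$ being multiplicity free.

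Next I would exploit the hypothesis $L/2=\mathrm{ord}(\sigma)$: as $l$ runs over either half-block $\{0,\dots,L/2-1\}$, the factor $\sigma^{l}$ sweeps out the whole subgroup $\langle\sigma\rangle$ exactly once, so each half of any row-difference sequence is a single scalar multiple of $\langle\sigma\rangle$, i.e.\ a coset. Factoring $\sigma^{l}$ out of each entry, the first and second halves of $\bc_j-\bd_k$ have coset representatives $A=\sigma^{-j}-\tau\sigma^{-k-1}$ and $B=\sigma^{k}-\tau\sigma^{j-1}$, while those of $\bc_j-\bc_{j'}$ are $\alpha=\sigma^{-j}-\sigma^{-j'}$ and $\gamma=\tau\sigma^{-1}(\sigma^{j'}-\sigma^{j})$. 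The whole argument then hinges on the two ratios
\begin{equation*}
\frac{B}{A}=\sigma^{\,j+k}\in\langle\sigma\rangle,
\qquad
\frac{\gamma}{\alpha}=\tau\,\sigma^{\,j+j'-1}.
\end{equation*}
For the cross differences, $B/A\in\langle\sigma\rangle$ forces the two half-cosets to coincide, so every value of $\bc_j-\bd_k$ occurs exactly twice: the sequence is multiplicity even and (i) holds. For the within-matrix differences, $\gamma/\alpha$ lies outside $\langle\sigma\rangle$ precisely when $\tau\notin\langle\sigma\rangle$, so the two half-cosets are disjoint and all $L$ entries are distinct: the sequence is multiplicity free and (ii) holds. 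An identical computation on $\bd_k-\bd_{k'}$ yields the ratio $\tau^{-1}\sigma^{\,k+k'+1}$, with the same conclusion. This is exactly where the standing hypothesis $\mathrm{ord}(\sigma)\ne|\bbZ_P^*|$ is used: it makes $\langle\sigma\rangle$ a proper subgroup of $\bbZ_P^*$, so a unit $\tau$ with $\tau\notin\langle\sigma\rangle$ exists, and I would read the statement's $\tau\in\bbZ_P^*$ in that sense (a unit outside $\langle\sigma\rangle$), since requirement (ii) fails otherwise.

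The step I expect to be the main obstacle is the bookkeeping that upgrades ``coset'' to ``$L/2$ distinct residues.'' The coset count above is clean only when each representative $A,B,\alpha,\gamma$ is a \emph{unit} of $\bbZ_P$, for then multiplication by it permutes $\bbZ_P$ and $\alpha\langle\sigma\rangle$ genuinely has $|\langle\sigma\rangle|=L/2$ elements; equivalently one needs the differences $\sigma^{a}-\sigma^{b}$ (with $a\ne b$, $0\le a,b<L/2$) to avoid the zero-divisors of $\bbZ_P$. When $P$ is prime this is automatic, and I would present the argument there first; for composite $P$ I would isolate ``$\sigma^{a}-\sigma^{b}$ is a unit'' as a short lemma (a condition on $P$ and $\sigma$), after which the coset argument goes through verbatim. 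Finally I would record the parameters — circulant size $r=P$, length $n=LP$, and the fact that requirement (i) licenses the CSS construction — to conclude that $H_C$ and $H_D$ indeed yield a $4$-cycle-free quantum QC-LDPC code.
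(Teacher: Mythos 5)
There is no in-paper proof to compare you against: Theorem~\ref{HIcont} is quoted from Ref.~\cite{HI07QLDPC} (just as Theorems~\ref{type-I1}--\ref{type-I3} are stated ``without proofs'') and is used only to generate the benchmark code ``Ex-HI.'' Judged on its own merits, your reduction is the right one --- requirement (i), $H_CH_D^T=0$, is equivalent to the cross differences $\bc_j-\bd_k$ being multiplicity even, and requirement (ii), girth $\geq 6$ of each matrix, is equivalent to the within-matrix differences being multiplicity free --- and your coset computations are correct. Two refinements: for (i) you should use the identity $B=\sigma^{j+k}A$ rather than the ratio $B/A$, since then the second half-multiset $\{\sigma^{l}B\}$ is literally a reindexing of the first half-multiset $\{\sigma^{l}A\}$ and multiplicity evenness follows with \emph{no} invertibility assumption on $A$, i.e.\ for every $P$; and your reading of the garbled hypothesis ``$\tau\in\bbZ_P^*=\{1,\sigma,\sigma^2,\cdots\}$'' as $\tau\in\bbZ_P^*\setminus\langle\sigma\rangle$ is the only one under which the claim can hold (if $\tau\in\langle\sigma\rangle$ the two halves of $\bc_j-\bc_{j'}$ coincide and 4-cycles appear). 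With those points in place, your proof is complete whenever $P$ is prime.

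The obstacle you flag in your last paragraph, however, is not a bookkeeping detail to be deferred to a lemma: it is the locus of an actual counterexample, so no proof from the hypotheses as printed can exist. When some $\sigma^a-\sigma^b$ is a zero divisor of $\bbZ_P$, multiplication by $\tau\sigma^{j+j'-1}$ can map the set $\alpha\langle\sigma\rangle$ onto itself even though $\tau\notin\langle\sigma\rangle$, collapsing the two halves of $\bc_j-\bc_{j'}$. The paper's own benchmark parameters $(J,L,P,\sigma,\tau)=(3,8,15,2,3)$ realize exactly this failure (note also that $\tau=3$ is not even a unit mod $15$, and $\sigma^2-1=3$ divides $15$): from the matrices printed in (\ref{HcHd}), the difference of rows $1$ and $3$ of $H_C$ is $(12,9,3,6,12,9,3,6)\bmod 15$ and that of rows $1$ and $3$ of $H_D$ is $(3,6,12,9,3,6,12,9)$ --- multiplicity even, not multiplicity free --- so by Theorem~\ref{type-I3} both matrices contain 4-cycles, and ``Ex-HI'' has girth $4$, not $\geq 6$. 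The correct conclusion is therefore stronger than what you propose: the condition that all differences $\sigma^a-\sigma^b$ ($0\leq a\neq b<L/2$) be units (automatic when $P$ is prime) must be promoted from a ``short lemma'' to an explicit hypothesis of the theorem, together with $\tau\in\bbZ_P^*\setminus\langle\sigma\rangle$; with those hypotheses your argument stands, and without them the statement is false as written.
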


Here, we pick the set of parameters $(J,L,P,\sigma,\tau)$ to be
$(3,8,15,2,3)$, to get a code with similar block size and rate to the other
examples. The exponent matrices $H_C$ and $H_D$ described in
theorem \ref{HIcont} are
\begin{eqnarray}
\label{HcHd}
H_C &=& \left[\begin{array}{cccccccc}%
1 & 2 & 4 & 8 & 6 & 12 & 9 & 3 \\
8 & 1 & 2 & 4 & 12 & 9 & 3 & 6 \\
4 & 8 & 1 & 2 & 9 & 3& 6 & 12
\end{array}\right] \\
H_D &=& \left[\begin{array}{cccccccc}%
9 & 3 & 6 & 12 & 14 & 13 & 11 & 7 \\
12& 9 & 3 & 6 & 13 & 11 & 7 & 14 \\
6 & 12 & 9 & 3 & 11 & 7 & 14 & 13
\end{array}\right],
\end{eqnarray}
and by the CSS construction, it will give a $[[120,38,4]]$ quantum
QC-LDPC code. We will call this code ``Ex-HI''.  It has rate
$k/n=38/120$, slightly higher (just over 1\%) than Ex1 and Ex2.

\begin{figure}[htbp]
    \includegraphics[width=0.5\textwidth]{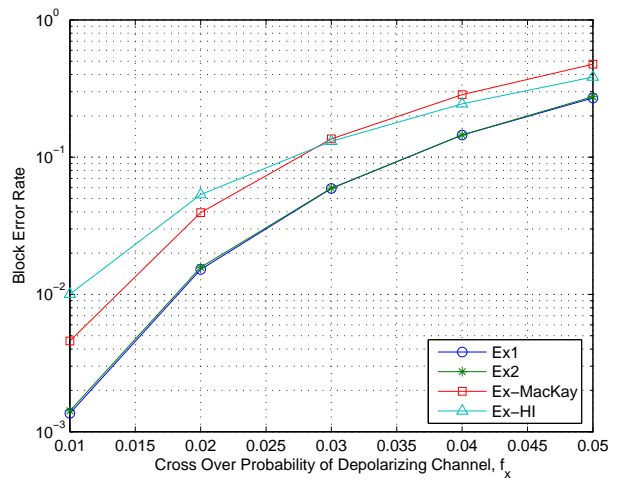}
  \caption{(Color online). Performance of quantum LDPC codes with SPA decoding, and 100-iteration}
  \label{fig}
\end{figure}

In the simulation, the channel is assumed to be the depolarizing
channel, which creates $X$ errors, $Y$ errors, and $Z$ errors with
equal probability $f_m$. Since all these quantum LDPC codes are
CSS-type quantum codes, $Z$ errors and $X$ errors can be decoded and
corrected separately by the sum-product algorithm using a standard
classical correction algorithm  \cite{MMM04QLDPC}.
Because the depolarizing channel can be thought of as producing a
separate list of $X$ errors and $Z$ errors (with a $Y$ error being both
an $X$ and a $Z$), doing two separate ``classical'' correction steps gives
an accurate simulation of quantum error correction.  For an EAQECC we
must include Bob's half of the shared ebits in the error correction step;
these bits, not having passed through the channel, are of course error-free.

We compare the performance of our examples in section \ref{III} with
these two dual-containing quantum LDPC codes in figure \ref{fig}.
The performances of Ex1 and Ex2 do not differ much. This is not
surprising, since these two codes have similar parameters.
The performance of Ex-MacKay is worse than Ex1 and Ex2, probably
because there are so many 4-cycles in Ex-MacKay. These cycles impair
the performance of sum-product decoding algorithm.  The
entanglement-assisted quantum QC-LDPC codes also outperform the
quantum QC-LDPC code of Ex-HI, probably because the classical QC-LDPC
codes used to construct our examples have better distance properties than
the classical QC-LDPC of ex-HI. This simulation result is also consistent with
our result in \cite{BDH06}: better classical codes give better quantum codes.

Of course, these 4 simulation results are by no means an exhaustive study of
all possible quantum LDPC codes.  These four examples were chosen without
optimizing for their error-correcting performance; in general, one expects generically
similar results for two codes produced by the same construction, so these comparisons
are likely to be typical.  This does not, of course, argue that there are no standard
quantum LDPC codes, or other entanglement-assisted LDPC  codes for that matter,
with performance superior to those in this paper.  However, in general the results
match our expectations:  it is much easier to find classical codes with good performance
(i.e., no 4-cycles and good distance properties), and from them construct quantum
codes that use a relatively small amount of shared entanglement,
than to satisfy the exact constraint of a dual-containing code; even more so than to
simultaneously find a code that is dual-containing and contains no 4-cycles.

Moreover, since these codes use the CSS construction, it is not surprising that
the iterative decoding algorithm works well:  we are effectively doing two successive
classical decoding steps, using classical codes that are known to be good, and
reflecting the classical result that codes without 4-cycles tend to outperform codes
with 4-cycles.  Furthermore, Devetak's proof of the quantum channel coding theorem
shows that codes with a CSS-like structure are good enough to achieve capacity
\cite{Devetak03}.  Therefore, it is quite possible that CSS-type quantum LDPC codes
are sufficient to given performance as good as is practically possible.  (Though we
certainly do not rule out the possibility that studying general additive LDPC codes over
GF(4) would reveal interesting properties---that is work for the future.)

\section{Conclusions}
\label{V}%
There are two advantages of Type-II QC-LDPCs over Type-I QC-LDPCs.
First, according to \cite{SV04} certain configurations of Type-II QC-LDPC
codes have larger minimum distance than Type-I QC-LDPC. Therefore,
we can construct better quantum QC-LDPCs from classical Type-II
QC-LDPC codes. Second, it seems likely that Type-II QC-LDPCs will have more
flexibility in constructing quantum QC-LDPC codes with small amount
of pre-shared entanglement, because of the ability to insert zero
submatrices. However, further investigation of this issue is required.

By using the entanglement-assisted error correction formalism, it is
possible to construct EAQECCs from any classical linear code, not just
dual-containing codes.  We have shown how to do this for two classes of
quasi-cyclic LDPC codes (Type-I and Type-II), and proven a number of
theorems that make it possible to bound how much entanglement is required
to send a code block for codes of these types. Using these results, we have
been able to easily construct examples of quantum QC-LDPC codes that
require only a relatively small amount of initial shared entanglement, and that
perform better (based on numerical simulations) than examples of previously
constructed dual-containing quantum LDPC codes. Since in general the
properties of quantum codes follows directly from the properties of the
classical codes used to construct them, and the evidence of our
examples suggests that the iterative decoders can also be made to
work effectively on the quantum versions of these codes, this should
make possible the construction of large-scale efficient quantum
codes.  These codes could be useful for quantum communications;
conceivably they could also be used as building blocks for standard
QECCs that might be of use in quantum computation, though that is
still a subject for further research.

We are especially interested in developing a new quantum decoding
algorithm in the future. Though the SPA decoding algorithm gives
a reasonable trade-off between complexity and performance, it
may not be the best choice for decoding quantum errors. One
reason for concern is that SPA ignores the purely quantum phenomenon
of degeneracy in the decoding process, which could possibly result in
introducing more errors instead of correcting them. Though this issue can
hopefully be fixed by adding simple heuristic methods on SPA, degeneracy
has also been shown to lead to convergence problems for some codes
\cite{PC08QLDPC}, though we have not observed that effect on performance
for the codes we present in this paper. If convergence problems prove to be common, we hope these can be overcome by a true quantum decoding algorithm.


\section*{Acknowledgments}
M.H. and T.A.B. acknowledge financial supported from NSF Grant
No.~CCF-0448658. I.D. and M.H. acknowledge financial support from
NSF Grant No.~CCF-0524811 and NSF Grant No.~CCF-0545845.

\bibliography{Ref}
\bibliographystyle{unsrt}

\end{document}